\documentclass [14pt]{amsart}
\usepackage{amsmath}
\usepackage{amssymb}
\usepackage{amscd}
\usepackage{epsfig}
\usepackage{amsxtra}
\usepackage{pifont}
\usepackage{mathabx}
\usepackage{MnSymbol}
\usepackage{accents}
\usepackage{scalerel,stackengine}
\usepackage{xcolor}
\allowdisplaybreaks
\stackMath
\newcommand\reallywidehat[1]{%
\savestack{\tmpbox}{\stretchto{%
  \scaleto{%
    \scalerel*[\widthof{\ensuremath{#1}}]{\kern-.6pt\bigwedge\kern-.6pt}%
    {\rule[-\textheight/2]{1ex}{\textheight}}
  }{\textheight}%
}{0.5ex}}%
\stackon[1pt]{#1}{\tmpbox}%
}

\newcommand\reallywidecheck[1]{%
\savestack{\tmpbox}{\stretchto{%
  \scaleto{%
    \scalerel*[\widthof{\ensuremath{#1}}]{\kern-.6pt\bigwedge\kern-.6pt}%
    {\rule[-\textheight/2]{1ex}{\textheight}}
  }{\textheight}%
}{0.5ex}}%
\stackon[1pt]{#1}{\scalebox{-1}{\tmpbox}}%
}

\numberwithin{equation}{section}

\newcommand{\RR}{{\mathbb R}}

\newcommand{\CC}{{\mathbb C}}

\newcommand{\TT}{\mathbb T}

\newcommand{\cA}{{\mathcal A}}
\newcommand{\cB}{{\mathcal B}}

\newcommand{\dd}{\mbox{\rm d}}

\newcommand{\Cc}{C_{\mathsf{c}}}

 \newtheorem{theorem}{Theorem}[section]
 \newtheorem{lemma}[theorem]{Lemma}
 \newtheorem{prop}[theorem]{Proposition}
 \newtheorem{coro}[theorem]{Corollary}

 \newtheorem{definition}[theorem]{Definition}
 \newtheorem{example}[theorem]{Example}
  \newtheorem{remark}[theorem]{Remark}

\newcommand{\lb}{\text{\textlquill} }
\newcommand{\rb}{\text{\textrquill} }

\allowdisplaybreaks

\begin{document}
\title[Eberlein convolution as inner product]{Eberlein convolution as an inner product and quantitative Bombieri--Taylor results}
\author{Daniel Lenz}
\address{Mathematisches Institut, Friedrich Schiller Universit\"at Jena, 07743 Jena, Germany}
\email{daniel.lenz@uni-jena.de}
\urladdr{http://www.analysis-lenz.uni-jena.de}

\author{Nicolae Strungaru}
\address{Department of Mathematics and Statistics, MacEwan University \\
10700 -- 104 Avenue, Edmonton, AB, T5J 4S2, Canada\\
and \\
Institute of Mathematics ``Simon Stoilow''\\
Bucharest, Romania}
\email{strungarun@macewan.ca}
\urladdr{https://sites.google.com/macewan.ca/nicolae-strungaru/home}

\begin{abstract} We present a simple argument providing a very general Bombieri--Taylor-type result to compute Bragg peaks in diffraction theory  as limits. At its abstract core the argument relies on considering the Eberlein convolution as an inner product. This inner product  gives an abstract version of a  Cauchy--Schwarz type inequality which implies the Bombieri--Taylor-type result as well as  generalizations.
\end{abstract}

\maketitle

\section{Introduction}
This article deals with a basic question in mathematical diffraction theory. Mathematical diffraction theory has become a flourishing field of research in the last decades not least due to  the discovery
of quasicrystals some four decades ago by diffraction experiments. This discovery was later honored with a Nobel prize in Chemistry in 2011. A key feature of quasicrystals is pure point diffraction with symmetries excluding lattices. The underlying order is known as aperiodic order in mathematics, see e.g. the  monographs and survey collections  \cite{BG,BG2,KLS}  for details and further references.

A basic framework of mathematical diffraction theory, as developed by Hof in \cite{Hof},  starts with a Delone  set $\varLambda$  in  Euclidean space $\RR^d$ modeling  the  positions of the atoms of the solid in question. Here, Delone set means that  there exist $r,R>0$ such that any ball of size $r$ in Euclidean space meets at most one point of $\varLambda$ and any ball of radius $R$ meets at least one point of $\varLambda$.

The autocorrelation $\gamma$  of the solid along a sequence $(B_n)$ of balls around the origin in $\RR^d$ with radii going to infinity comes about by an averaging procedure as
$$\gamma = \gamma_{(B_n)} = \lim_{n\to \infty} \frac{1}{|B_n|} \sum_{x,y\in \varLambda \cap B_n} \delta_{x-y} \,,$$
where the limit is assumed to exist  in the sense of vague convergence of measures and  $\delta_z$ denotes the unit point mass at $z\in\RR^d$.

The Fourier transform
$\widehat{\gamma}$
of $\gamma$ is called diffraction measure and  governs the outcome of diffraction experiments for the solid. In line with the discovery of quasicrystals, the case where $\widehat{\gamma}$ is a pure point measure (or has a  large point component) is of prime interest. In particular one is interested in  the atoms of $\widehat{\gamma}$, which are  called Bragg peaks of $\varLambda$,   and the value assigned to Bragg peaks by $\widehat{\gamma}$, which are  called the intensity of the Bragg peaks.

The basic intuition about the Bragg peaks and their intensities is that
\begin{equation}\label{eq:BT}
A_y = \lim_{n\to \infty} \frac{1}{|B_n|}\sum_{x\in \varLambda\cap B_n} e^{- 2 \pi i  x \cdot y } \mbox{
exists  and } \widehat{\gamma}(\{y\}) = |A_y|^2 \mbox{
holds }\tag{BT}
\end{equation}
for all $y \in\RR^d$.  In particular the standard method to determine the Bragg peaks is by computing $A_y$ and
singling out those $y$ with $A_y \neq 0$. This  requires care and effort
as  neither existence of the limit nor the equality are  true in general. Accordingly, they have  been  subject to intense research for various specific models.

 Both conditions in \eqref{eq:BT}  were assumed to hold  in  the very first articles on the topic by Bombieri and Taylor \cite{BT,BT2}, without any justification and the issue then  became dubbed 'Bombieri--Taylor  conjecture' by Hof in \cite{Hof}.  For substitution models this  was then justified in \cite{GK} and for cut-and-project models in \cite{Schlottmann}. A connection to dynamical system and continuity of eigenfunction was explored in various places and tackled in a rather general situation in \cite{Lenz}  to which we refer for further details and references.   A weaker version of \eqref{eq:BT} in a rather general context was then established in \cite{LS}.  In the physical literature, \eqref{eq:BT} is also known under the name 'commutativity of the Wiener diagram', see e.g.  \cite{BG} for discussion. In the influential survey of Lagarias \cite{Lag} this  is called the 'consistent phase property'. Further  discussion can be found in the authors' work with Spindeler \cite{LSS}.

Below we present a very simple argument showing that the   inequality
$$\widehat{\gamma}(\{y\}) \geq |\widetilde{A}_y|^2$$
 holds for all $y \in\RR^d$ without \textbf{any} further assumption, where
$$\widetilde{A_y} =\limsup_{n\to \infty}\frac{1}{|B_n|} \left| \sum_{x\in \varLambda\cap B_n} e^{- 2 \pi  i x \cdot y }\right| \,.$$
This justifies all previous literature on the topic arguing that  non-vanishing $A_y$ give Bragg peaks of the systems.

The simple argument is based on an approach developed  in a series of articles \cite{LSS,LSS2,LSS3}. Indeed, this approach allows us to infer the  desired inequality as a consequence of  (a suitable abstraction of) a Cauchy--Schwarz type inequality.  Relevant  parts are summarized in the next two  sections. The actual argument is then given in Section \ref{sec-BT}. An further abstraction recovering a main result of \cite{LS} is the contained in Section \ref{sec-CS}.

As it does not present any complication but rather makes the arguments more transparent, we work below in the setting of a locally compact abelian group $G$ (rather than just Euclidean space $\RR^d$) and consider translation bounded measures $\mu$ (rather than Delone sets $\varLambda$).

\section{Background and notation}\label{sec-back}
In this section we   recall some basic  notation and definitions from harmonic analysis (see \cite{ARMA1,BF} for details).

Let $G$ be a locally compact abelian group (LCAG).   The group operations are written additively and
the neutral element is denoted by $0$.   The Haar measure on $G$ is denoted by $\theta_G$ and
integration of $f$ with
respect to Haar measure is denoted by $\int_G f \dd s$. Also, we write $|A|$ instead of $\theta_G (A)$ for  measurable subsets of $G$.
We denote by $\Cc(G)$ the  vector space consisting of compactly supported continuous functions on $G$. For a $\varphi \in \Cc (G)$ we define $\widetilde{\varphi} \in \Cc (G)$ by
$\widetilde{\varphi}(t) := \overline{\varphi(-t)}$ for $t\in G$.

A \emph{Radon measure} on  $G$  is a linear functional $\mu : \Cc(G)
\to \CC$ with the property that for each compact set $K \subseteq G$
there exists  $C_K\geq 0$ such that all functions $\varphi \in
\Cc(G)$ whose support is contained in $K$ satisfy
\[
\left| \mu(\varphi) \right| \leq C_K \| \varphi \|_\infty \,.
\]
We will often write $\int_{G} \varphi(t) \dd \mu(t):= \mu(\varphi)$.

Given a Radon measure $\mu$, we can define a new Radon measure $\widetilde{\mu}$ via
\[
\widetilde{\mu}(\varphi) := \overline{\mu(\tilde{\varphi})} \qquad \forall \varphi \in \Cc(G) \,.
\]

To any Radon measure $\mu$ there exists a unique positive regular measure
$\nu$ and a measurable $h : G\longrightarrow \CC$ with $|h| =1$
such that
\[
\mu (\varphi) = \int_G \varphi(t) \cdot h(t) \dd \nu(t)
\]
holds  for all $\varphi \in\Cc (G)$ \cite{Ped}. The measure $\nu$ is called
the \textit{total variation} of $\mu$ and henceforth denoted by $|\mu|$.  For suitable measurable  functions $f$ on $G$, e.g. those that are bounded and vanish outside a compact set we then define
$$\int_G f(t) \dd\mu(t) :=\int_G f(t) \cdot h(t) \dd|\mu|(t) \,.$$
Note that this then implies
$$\int_G f(t) \cdot  \overline{h(t)} \dd\mu(t) = \int_{G} f(t) d|\mu|(t)$$
for all such $f$ due to $|h| =1$.

The measure $\mu$ is called \emph{finite} if $|\mu|(G)<\infty$ holds.

Whenever $A$ is a Borel subset of $G$ we define the restriction $\mu|_A$ of $\mu$ to
$A$ to be the measure satisfying
\[
\mu|_A (\varphi) :=\int_A \varphi(t) \cdot h(t) \dd|\mu|(t)
\]
for all $\varphi \in\Cc (G)$.

\smallskip

The convolution $\varphi \ast\psi$ of $\varphi, \psi\in\Cc (G)$ is
the function on $G$ defined by
\[
\varphi \ast \psi (t) := \int_{ G} \varphi (s) \cdot \psi (t-s) \dd s \,.
\]
The convolution  $\mu\ast \varphi$ between  a
measure $\mu$ and a function $\varphi \in \Cc(G)$ is the function on
$G$ defined by
\[
\mu\ast \varphi (t) = \int_{G} \varphi(t-s) \dd \mu(s)\,.
\]

If the measure $\mu$ has the property that $\mu\ast \varphi$ is  bounded for all $\varphi \in\Cc (G)$ it is called \textit{translation bounded}.

The convolution $\mu\ast \nu$ between two finite measures is the
measure given by
\[
\mu \ast\nu (\varphi) = \int_{G} \int_{ G} \varphi (s+t) \dd\mu(s) \dd\nu
(t) \qquad \mbox{ for }  \varphi \in \Cc(G) \,.
\]

A measure $\gamma$ is called \textit{positive definite} if
\[
\gamma\ast \varphi \ast \widetilde{\varphi} (0)\geq 0
\]
holds for all $\varphi \in\Cc (G)$.

The dual group $\widehat{G}$ of $G$ is the set of all continuous group homomorphisms $\chi :  G\longrightarrow \TT$. Here, $\TT$ is the group of  complex numbers with modulus $1$ (equipped with multiplication).  The dual group $\widehat{G}$  is a locally compact Abelian group in a natural way. The \textit{Fourier transform} $\widehat{f}$ of $f \in L^1(G)$ is the function on $\widehat{G}$ with
\[
\widehat{f}(\chi) = \int_{G} \overline{\chi(t)} \cdot  f (t) \dd t \,.
\]

\medskip

Every positive definite measure $\gamma$ admits a (unique) positive  measure $\widehat{\gamma}$ on $\widehat{G}$ with
\[
\gamma\ast \varphi \ast \widetilde{\varphi}(0) = \int_{\widehat{G}} |\widehat{\varphi}(\chi) |^2 \dd\widehat{\gamma}(\chi)
\]
for all  $\varphi \in \Cc (G)$. The measure $\widehat{\gamma}$ is called the \textit{Fourier transform} of $\gamma$ (see \cite{BF} for details).

\begin{example}[$\chi \theta_G$]\label{ex-chi} For us the measure $ \gamma =  \chi\theta_G$ for $\chi \in \widehat{G}$ will play a special role.  It is easily seen to be positive definite and to satisfy $\gamma = \reallywidehat{\delta_\chi}$, where $\delta_\chi$ denotes the unit point measure at $\chi \in\widehat{G}$.
\end{example}

\begin{definition} The linear hull of the positive definite measures on $G$ is denoted by $LP (G)$.
\end{definition}
We note that the Fourier transform can be extended from the positive definite measures to all of $LP(G)$ by linearity. We write $\widehat{\mu}$ for the Fourier transform of $\mu\in LP(G)$.

\section{The Eberlein convolution as inner product}
In this section we present  the framework of diffraction via the Eberlein convolution as  developed in
\cite{LSS,LSS2,LSS3}. We refer to these works for details and proofs.

We say that $\mu, \nu$ have a well defined \textit{reflected Eberlein convolution} with respect to the van Hove net $\cA$ if
\[
\lb \mu , \nu \rb_{\cA} := \lim_{i} \frac{1}{|A_i|} (\mu|_{A_i})*\widetilde{(\nu|_{A_i})}
\]
exists. In this case, we call $ \lb  \mu, \nu \rb_{\cA}$ the \textit{reflected Eberlein convolution} of $\mu$ and $\nu$.

If $\lb \mu, \mu \rb_{\cA}$ exists, then it is called the \textit{autocorrelation measure of $\mu$} and is denoted by $\gamma_{\mu}$.

The Eberlein convolution has many features of an inner product. In particular the following holds for  translation bounded measures  $\mu, \nu,\eta$:

\begin{itemize}
\item[(IP1)]If $\lb \nu,\nu\rb_{\cA} $ exists it is positive definite.

\item[(IP2)]  If $ \lb \mu , \nu
\rb_{\cA}$ exists, then,  $ \lb \nu , \mu
\rb_{\cA}$ exists and $
 \lb  \nu,  \mu \rb_{\cA} = \widetilde{ \lb \mu, \nu \rb_{\cA} }$
holds.

\item[(IP3)]  If $\lb \mu,\nu\rb_{\cA} $  and  $\lb \mu,\eta\rb_{\cA}$ exist then $\lb\mu, a \nu + b\eta\rb_{\cA}$ exists as well and
\[
\lb\mu, a \nu + b\eta\rb_{\cA} = a \lb \mu,\nu\rb_{\cA} + b\lb \mu,\eta\rb_{\cA}
\]
holds for all $a,b\in\CC$.

\end{itemize}

As consequence of $(IP2),(IP3)$ is a polarization-type identity. Specifically,
\begin{equation}\label{eq:pol}
 \lb\mu,\nu\rb_{\cA} = \sum_{k=0}^4 i^k \cdot  \lb \mu + i^k \nu,\mu + i^k \nu\rb_{\cA} \tag{Pol}
\end{equation}
holds for translation bounded $\mu,\nu$ whenever the autocorrelations of $\mu$ and $\nu$ exist and $\lb\mu,\nu\rb_{\cA}$ exists.  So, we infer in particular that $\lb \mu,\nu\rb_{\cA}$ belongs to the $LP(G)$ in this case. In particular, $ \lb\mu,\nu\rb_{\cA}$ is Fourier transformable as a measure.  We will come back to this in a later section of the article.

To shorten notation, we will often denote by
\begin{align*}
  \gamma_{\mu,\nu} &:=  \lb\mu,\nu\rb_{\cA} \\
  \rho_{\mu,\nu} &:= \reallywidehat{\gamma_{\mu,\nu}}=\reallywidehat{ \lb\mu,\nu\rb_{\cA}}
\end{align*}

\smallskip

Let us next recall the concept of Fourier--Bohr coefficient.  Specifically, for a translation bounded measure $\mu$ and a $\chi \in\widehat{G}$ we call
\[
a_{\chi}^\cA(\mu) = \lim_{i} \frac{1}{|A_i|} \int_{A_i} \overline{\chi(t)} \dd \mu(t) \,.
\]
the \textit{Fourier-Bohr coefficient} of $\mu$ at $\chi$ (if the limit exists).
The Fourier-Bohr coefficient arises from an Eberlein convolution. Indeed,  the Fourier--Bohr coefficient
$a_{\chi}^\cA(\mu)$ exists  if and only if  $\lb \mu, \chi\theta_{G} \rb_{\cA}$ exists and
$$\lb \mu, \chi \theta_{G} \rb_{\cA}= (a_{\chi}^\cA(\mu) \chi)\theta_{G}
$$
holds.  Furthermore, for  all $\chi \in \widehat{G}$ we have
\[
\lb \chi \theta_{G}, \chi \theta_{G} \rb_{\cA}= \chi \theta_{G} \,.
\]

\section{The quantitative Bombieri--Taylor}\label{sec-BT}
In this section we use the set-up given in the previous two sections to provide a quantitative Bombieri--Taylor result.

\begin{lemma}\label{lem-BT} Let $\cA=\{ A_i\}$ be  a van Hove net and  $\mu$  a translation bounded  measure and $\chi \in \widehat{G}$. Assume that the autocorrelation $\gamma_{\mu}$ and the Fourier--Bohr coefficient $a^\cA_{\chi}(\mu)$ exists with respect to $\cA$.
Let $c \in \CC$ be arbitrary and set
\[
\nu:= \mu -(c\chi) \theta_{G} \,.
\]
Then, the autocorrelation $\gamma_{\nu}$ exists with respect to $\cA$ and
\[
\gamma_{\nu}= \gamma_{\mu}+ \left( \left| c- a_{\chi}^\cA(\mu) \right|^2- \left|a_{\chi}^\cA(\mu) \right|^2  \right) \chi\theta_{G} \,.
\]
In particular,
\[
\reallywidehat{\gamma_{\nu}}(\{ \chi \}) = \reallywidehat{\gamma_{\mu}}(\{ \chi \})+\left| c- a_{\chi}^\cA(\mu) \right|^2- \left|a_{\chi}^\cA(\mu) \right|^2  \,.
\]
\end{lemma}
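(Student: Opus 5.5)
We expand $\gamma_\nu = \lb \mu - (c\chi)\theta_G,\ \mu-(c\chi)\theta_G\rb_{\cA}$ using the inner-product properties (IP2) and (IP3), together with the facts about Fourier--Bohr coefficients recalled in the previous section. Write $a := a^\cA_\chi(\mu)$ and $\omega := \chi\theta_G$.

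First we record which Eberlein convolutions exist.
\begin{itemize}
\item $\lb\mu,\mu\rb_\cA=\gamma_\mu$ exists by assumption.
\item Since $a$ exists, $\lb\mu,\omega\rb_\cA$ exists and equals $a\,\omega$.
\item By (IP2), $\lb\omega,\mu\rb_\cA$ exists and equals $\widetilde{a\omega}=\overline{a}\,\widetilde{\omega}$.
\item Finally, $\lb\omega,\omega\rb_\cA=\omega$.
\end{itemize}
We compute $\widetilde{\omega}$ directly from the definition. For $\varphi\in\Cc(G)$,
\[
\widetilde\omega(\varphi)=\overline{\int\overline{\varphi(-t)}\chi(t)\,\dd t}=\int\varphi(-t)\overline{\chi(t)}\,\dd t=\int\varphi(t)\chi(t)\,\dd t,
\]
so $\widetilde{\omega}=\omega$. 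Hence $\lb\omega,\mu\rb_\cA=\overline{a}\,\omega$.

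Next we expand by linearity. By (IP3) in the second slot, with $\mu$ fixed in the first slot, $\lb\mu,\nu\rb_\cA$ exists and
\[
\lb\mu,\nu\rb_\cA=\gamma_\mu-\overline{c}\,a\,\omega.
\]
The conjugate appears in the second slot because $\widetilde{(c\omega)|_{A}}=\overline{c}\,\widetilde{\omega|_A}$. This is directly visible from the definition, and it is the convention under which (IP3) should be read. I would check this directly from $\frac{1}{|A_i|}(\mu|_{A_i})*\widetilde{(\nu|_{A_i})}$, where linearity of restriction and of convolution make it immediate. Likewise, $\lb\omega,\nu\rb_\cA$ exists and
\[
\lb\omega,\nu\rb_\cA=\overline{a}\,\omega-\overline{c}\,\omega.
\]

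For the first slot we use (IP2): $\lb\nu,\mu\rb_\cA$ exists and equals $\widetilde{\lb\mu,\nu\rb_\cA}$. Alternatively, and more simply, we can argue directly from the definition. The map $(\mu,\nu)\mapsto\frac{1}{|A_i|}(\mu|_{A_i})*\widetilde{(\nu|_{A_i})}$ is sesquilinear at each stage $i$, so limits of sums exist whenever the individual limits exist. This gives
\[
\gamma_\nu=\lb\mu,\mu\rb_\cA-\overline{c}\lb\mu,\omega\rb_\cA-c\lb\omega,\mu\rb_\cA+|c|^2\lb\omega,\omega\rb_\cA.
\]
All four terms on the right exist, so $\gamma_\nu$ exists and
\[
\gamma_\nu=\gamma_\mu+\bigl(|c|^2-\overline{c}a-c\overline{a}\bigr)\omega.
\]
Since $|c-a|^2-|a|^2=|c|^2-\overline{c}a-c\overline{a}$, this is the claimed formula.

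For the ``in particular'' statement, we take Fourier transforms, which are linear on $LP(G)$. The measure $\gamma_\mu$ is positive definite by (IP1), and $\widehat{\omega}=\delta_\chi$ by Example~\ref{ex-chi}. Therefore
\[
\widehat{\gamma_\nu}=\widehat{\gamma_\mu}+\bigl(|c-a|^2-|a|^2\bigr)\delta_\chi.
\]
Evaluating both sides at $\{\chi\}$ gives the claim.

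The only delicate point is the bookkeeping of complex conjugates and of $\widetilde{\omega}=\omega$, that is, making sure the cross terms come out as $\overline{c}a$ and $c\overline{a}$ rather than some other pairing. I would settle this by working directly with the finite-stage expressions as above, which avoids relying on the exact form of (IP3).
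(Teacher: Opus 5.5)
Your proposal is correct and is essentially the paper's proof. Like the paper, you expand $\gamma_\nu=\lb \mu-c\,\chi\theta_G,\mu-c\,\chi\theta_G\rb_{\cA}$ into four Eberlein convolutions, insert $\lb\mu,\chi\theta_G\rb_{\cA}=a\,\chi\theta_G$, $\lb\chi\theta_G,\mu\rb_{\cA}=\overline{a}\,\chi\theta_G$ and $\lb\chi\theta_G,\chi\theta_G\rb_{\cA}=\chi\theta_G$, complete the square, and take Fourier transforms via Example~\ref{ex-chi}; your conjugate-linear reading of the second slot and your check $\widetilde{\chi\theta_G}=\chi\theta_G$ match what the paper's computation actually uses, even though its stated (IP3) omits the conjugates on $a,b$.
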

\begin{proof} By the inner product properties (IP2), (IP3) we see that
\[
\gamma_\nu = \lb \mu -(c\chi) \theta_{G} ,\mu -(c\chi) \theta_{G} \rb_\cA
\]
exists and satisfies
\begin{align*}
 \gamma_{\nu} &=\lb \mu -(c\chi) \theta_{G} ,\mu -(c\chi) \theta_{G} \rb_\cA \\
   &=\lb \mu ,\mu \rb_\cA+ -\overline{c} \lb \mu ,\chi \theta_{G} \rb_\cA-c \lb \mu \chi \theta_{G} ,\mu \theta_{G} \rb_\cA+|c|^2\lb \mu \chi \theta_{G} ,\chi \theta_{G} \rb_\cA \\
   &= \gamma_{\mu}-\overline{c}  (a_{\chi}^\cA(\mu) \chi)\theta_{G} - c (\overline{a_{\chi}^\cA(\mu)} \chi)\theta_{G}+|c|^2 \chi\theta_{G} \\
   &= \gamma_{\mu}+ \left( |c|^2-\overline{c} a_{\chi}^\cA(\mu)  - c \overline{a_{\chi}^\cA(\mu)}\right) \chi\theta_{G} \,.
\end{align*}
By simple algebraic manipulations we find
\begin{align*}
|c|^2-\overline{c}  a_{\chi}^\cA(\mu)  - c \overline{a_{\chi}^\cA(\mu)} &= c \bar{c} -\overline{c}  a_{\chi}^\cA(\mu)  - c \overline{a_{\chi}^\cA(\mu)} +a_{\chi}^\cA(\mu)  \overline{a_{\chi}^\cA(\mu) } -a_{\chi}^\cA(\mu)  \overline{a_{\chi}^\cA(\mu) }  \\
 &= (c- a_{\chi}^\cA(\mu))\cdot \overline{ (c- a_{\chi}^\cA(\mu))} - a_{\chi}^\cA(\mu)  \overline{a_{\chi}^\cA(\mu) } \\
 &= \left| c- a_{\chi}^\cA(\mu) \right|^2- \left|a_{\chi}^\cA(\mu) \right|^2
\end{align*}
Putting this together we obtain the first statement of the lemma. The second statement is an immediate consequence of the first statement after taking the Fourier transform and appealing to the Example ~\ref{ex-chi} discussed above.
\end{proof}

We can now prove the main result in this paper.

\begin{theorem}[Quantitative Bombieri--Taylor] Let $\mu$ be a translation bounded measure, let $\cA=\{ A_i\}$ be a van Hove net on $G$ along which the autocorrelation $\gamma_{\mu}$ exists and  let $\chi \in \widehat{G}$. Let
\[
\widetilde{A_\chi} := \limsup_i \frac{1}{|A_i|} \left| \int_{A_i} \overline{\chi(t)} \dd \mu(t)\right| \,.
\]
Then,
\[
\reallywidehat{\gamma_{\mu}}(\chi) \geq \left(\widetilde{A_\chi}\right)^2 \,.
\]
\end{theorem}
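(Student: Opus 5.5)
The plan is to reduce to Lemma~\ref{lem-BT} by passing to a subnet along which the Fourier--Bohr coefficient exists. If $\widetilde{A_\chi}=0$ there is nothing to prove, since $\reallywidehat{\gamma_\mu}$ is a positive measure by (IP1). Otherwise, translation boundedness of $\mu$ together with the van Hove property gives a uniform bound
\[
\frac{1}{|A_i|}\left|\int_{A_i}\overline{\chi(t)}\,\dd\mu(t)\right| \leq \frac{|\mu|(A_i)}{|A_i|}\leq C
\]
for all sufficiently large $i$. The net of complex numbers $\frac{1}{|A_i|}\int_{A_i}\overline{\chi}\,\dd\mu$ is therefore eventually bounded. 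By compactness we can pick a subnet $\cB=\{A_{i_j}\}$ along which these averages converge to some $a\in\CC$ with $|a|=\widetilde{A_\chi}$. Concretely, first choose a subnet realizing the limsup of the moduli, then a further subnet on which the values themselves converge.

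Next we check that Lemma~\ref{lem-BT} applies along $\cB$. A subnet of a van Hove net is again van Hove, so this is fine. The autocorrelation along $\cB$ exists and equals $\gamma_\mu$, because $\gamma_\mu$ is the limit of the net $\frac{1}{|A_i|}\mu|_{A_i}*\widetilde{\mu|_{A_i}}$, and every subnet has the same limit. By construction $a^{\cB}_\chi(\mu)=a$. Lemma~\ref{lem-BT} with $c:=a$ then gives the existence of $\gamma_\nu$ for $\nu=\mu-(a\chi)\theta_G$, together with
\[
\reallywidehat{\gamma_\nu}(\{\chi\})=\reallywidehat{\gamma_\mu}(\{\chi\})-|a|^2 .
\]

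Finally, $\gamma_\nu=\lb\nu,\nu\rb_{\cB}$ is positive definite by (IP1). Its Fourier transform is therefore a positive measure, so $\reallywidehat{\gamma_\nu}(\{\chi\})\geq 0$. This yields $\reallywidehat{\gamma_\mu}(\{\chi\})\geq |a|^2=(\widetilde{A_\chi})^2$.

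The main obstacle is the boundedness step justifying the subnet extraction. It needs the estimate $\sup_i |\mu|(A_i)/|A_i|<\infty$ for translation bounded $\mu$ along a van Hove net. This is standard: cover $A_i$ by translates of a fixed compact set, and use the van Hove property to control the boundary. One could alternatively sidestep it by noting that if the averages were unbounded along a subnet, the autocorrelation itself could not exist. I would simply quote the standard estimate.
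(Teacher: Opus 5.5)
Your proposal is correct and follows the paper's proof essentially step by step: you extract a subnet realizing the $\limsup$, pass to a further subnet on which the averages converge to some $a$ with $|a|=\widetilde{A_\chi}$, apply Lemma~\ref{lem-BT} along that subnet, and conclude via positivity of $\reallywidehat{\gamma_\nu}$ from (IP1). The differences are cosmetic: the paper keeps $c$ arbitrary before setting $c=A_\chi$, whereas you plug in $c=a$ directly, and you justify the bound $\sup_i |\mu|(A_i)/|A_i|<\infty$, which the paper simply asserts from translation boundedness.
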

\begin{proof}
Let $(B_j)$ be a subnet of $\cA$ such that
\[
\widetilde{A_\chi} := \lim_{j} \frac{1}{|B_j|} \left| \int_{B_j} \overline{\chi(t)} \dd \mu(t)\right| \,.
\]
Since $\mu$ is translation bounded, we have $\widetilde{A_\chi} <\infty$. Then, by compactness, there exists some subnet $(C_l)$ of $(B_j)$ along which the following limit exists
\[
A_\chi := \lim_{l} \frac{1}{|C_l|} \int_{C_l} \overline{\chi(t)} \dd \mu(t) \,.
\]
Then,
\[
\widetilde{A_\chi} = |A_\chi| \,.
\]
Since $\gamma_{\mu}$ is the autocorrelation of $\mu$ along $C_l$, we can apply Lemma~\ref{lem-BT} to obtain that for all $c \in \CC$ we have along $C_l$
\[
\reallywidehat{\gamma_{\mu-c\chi}}(\{ \chi \}) = \reallywidehat{\gamma_{\mu}}(\{ \chi \})+\left| c- A_\chi  \right|^2- \left|A_\chi  \right|^2  \,.
\]
This gives
\[
 \reallywidehat{\gamma_{\mu}}(\{ \chi \})+\left| c- A_\chi \right|^2- \left|A_\chi \right|^2= \reallywidehat{\gamma_{\mu-c\chi}}(\{ \chi \}) \geq 0
\]
and hence
\[
\left|A_\chi \right|^2 \leq  \reallywidehat{\gamma_{\mu}}(\{ \chi \})+\left| c- A_\chi \right|^2 \qquad \forall c \in \CC \,.
\]
In particular, when $c=A_\chi$ we get
\[
 \reallywidehat{\gamma_{\mu}}(\{ \chi \}) \geq \left|A_\chi \right|^2 =\left(\widetilde{A_\chi}\right)^2 \,.
\]
This proves the claim.
\end{proof}

\begin{coro}Let $\cA=\{ A_i\}$ be a van Hove net on $G$ and  $\mu$ a translation bounded measure  and $\chi \in \widehat{G}$.
Let $\cB = \{B_j\}$ be any subnet of $\cA$ along which the autocorrelation $\gamma_{\mu}$ exists and let $A_\chi$ be any cluster point of
\[
\frac{1}{|B_j|} \int_{B_j} \overline{\chi(t)} \dd \mu(t) \,.
\]
Then,
\[
\reallywidehat{\gamma_{\mu}}(\chi) \geq |A_\chi|^2 \,.
\]\qed
\end{coro}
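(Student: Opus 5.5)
The plan is to reduce the corollary to Lemma~\ref{lem-BT}, applied along a further subnet. Fix the subnet $\cB=\{B_j\}$ along which $\gamma_\mu$ exists, and fix a cluster point $A_\chi$ of the net
\[
\frac{1}{|B_j|}\int_{B_j}\overline{\chi(t)}\dd\mu(t).
\]
This net is bounded because $\mu$ is translation bounded and $\cB$ is van Hove. Since $A_\chi$ is a cluster point, there is a subnet $\mathcal{C}=\{C_l\}$ of $\cB$ along which the net converges to $A_\chi$. In other words, $a^{\mathcal{C}}_\chi(\mu)$ exists and equals $A_\chi$.

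Next I check the hypotheses of Lemma~\ref{lem-BT} for $\mathcal{C}$. A subnet of a van Hove net is again van Hove. Moreover, the autocorrelation along $\mathcal{C}$ exists and equals the same $\gamma_\mu$, because a vague limit along $\cB$ persists along every subnet. Hence the lemma applies along $\mathcal{C}$ for every $c\in\CC$. It gives that $\gamma_{\mu-(c\chi)\theta_G}$ exists along $\mathcal{C}$ and that
\[
\reallywidehat{\gamma_{\mu-(c\chi)\theta_G}}(\{\chi\})=\reallywidehat{\gamma_\mu}(\{\chi\})+|c-A_\chi|^2-|A_\chi|^2 .
\]

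By (IP1), the left-hand side is the Fourier transform of a positive definite measure. It is therefore a positive measure, so its value on $\{\chi\}$ is nonnegative. Choosing $c=A_\chi$ then gives $\reallywidehat{\gamma_\mu}(\{\chi\})\ge |A_\chi|^2$.

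The only point needing care is the passage to subnets. One must make sure that the autocorrelation, computed along the refined net $\mathcal{C}$, is still the given $\gamma_\mu$, and that $\mathcal{C}$ keeps the van Hove property. Both follow directly from the definitions: limits pass to subnets, and the van Hove condition is a limit statement along the net. This is the same mechanism used in the proof of the preceding theorem, so the corollary amounts to rerunning that argument with an arbitrary cluster point in place of one realizing the $\limsup$.
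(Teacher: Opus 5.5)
Your proposal is correct and follows the paper's own argument, the proof of the preceding theorem: pass to a subnet of $\cB$ along which the averages converge to $A_\chi$, note that $\gamma_\mu$ and the van Hove property persist along it, apply Lemma~\ref{lem-BT} with $c=A_\chi$, and conclude from (IP1) that $\reallywidehat{\gamma_{\mu-(c\chi)\theta_G}}(\{\chi\})\geq 0$. The only difference is that you start from an arbitrary cluster point rather than one realizing the $\limsup$, which is exactly the adaptation the corollary intends.
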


As a consequence, we get the following way to find the Bragg peaks.

\begin{coro}\cite{LS} Let $\cA=\{ A_i\}$ a van Hove net on $G$ and $\mu$ a translation bounded measure  and $\chi \in \widehat{G}$. If the autocorrelation $\gamma_{\mu}$ exists with respect to $\cA$ and if
\[
\frac{1}{|A_j|} \int_{A_j} \overline{\chi(t)} \dd \mu(t) \not\rightarrow 0
\]
then
\[
\reallywidehat{\gamma_{\mu}}(\{\chi\})>0 \,.
\]\qed
\end{coro}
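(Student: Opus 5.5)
This follows by contraposition from the Quantitative Bombieri--Taylor theorem, or equivalently from the first corollary. Assume that the autocorrelation $\gamma_\mu$ exists along $\cA$, and that the averages
\[
a_j := \frac{1}{|A_j|} \int_{A_j} \overline{\chi(t)} \dd \mu(t)
\]
do not converge to $0$. The aim is to show that $\widetilde{A_\chi} := \limsup_j |a_j|$ is strictly positive. Once that is known, the theorem gives
\[
\reallywidehat{\gamma_{\mu}}(\{\chi\}) \geq \bigl(\widetilde{A_\chi}\bigr)^2 > 0 ,
\]
which is the claim.

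To see that $\widetilde{A_\chi}>0$, use the following elementary fact about nets of complex numbers: $a_j \to 0$ if and only if $|a_j| \to 0$, and since $|a_j|\geq 0$ this holds if and only if $\limsup_j |a_j| = 0$. So non-convergence to $0$ forces $\limsup_j |a_j| > 0$. Concretely, there is an $\varepsilon>0$ such that $|a_j|\geq \varepsilon$ for cofinally many $j$, and hence $\widetilde{A_\chi}\geq \varepsilon$.

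Finiteness of $\widetilde{A_\chi}$ is already used inside the proof of the theorem, where it comes from translation boundedness of $\mu$, so nothing more is needed there. If one prefers to use the first corollary instead, take the subnet along which $|a_j|\geq\varepsilon$. Translation boundedness makes $(a_j)$ bounded, so this subnet has a cluster point $A_\chi$ with $|A_\chi|\geq \varepsilon$. The autocorrelation still exists along this subnet, so the corollary gives $\reallywidehat{\gamma_{\mu}}(\{\chi\})\geq \varepsilon^2>0$.

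There is no real obstacle here. The only point needing a line of justification is the equivalence between $a_j\not\to 0$ and $\limsup_j|a_j|>0$ for nets.
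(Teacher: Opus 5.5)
Your proposal is correct and follows the paper's route: the corollary is stated as an immediate consequence of the Quantitative Bombieri--Taylor theorem, and the only step to supply is the one you give, namely that $a_j\not\to 0$ forces $\widetilde{A_\chi}=\limsup_j|a_j|\geq\varepsilon>0$, so that $\reallywidehat{\gamma_{\mu}}(\{\chi\})\geq\bigl(\widetilde{A_\chi}\bigr)^2>0$. Your alternative via the first corollary also works, since the cofinal set of indices with $|a_j|\geq\varepsilon$ is directed and hence defines a subnet along which $\gamma_\mu$ still exists.
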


\section{A generalization via Cauchy--Schwarz-type inequality}\label{sec-CS}
In this section we have a closer look at the previous argument. This reveals a Cauchy--Schwarz-type inequality behind it. This Cauchy--Schwarz-type inequality allows us to obtain a short alternative proof for a main result of \cite{LS}.

\medskip

A linear map $\tau : LP(G)\longrightarrow \CC$ is called \textit{positive} if $\tau(\gamma)\geq 0$ holds for all positive definite $\gamma$.

\begin{lemma}[Cauchy--Schwarz]\label{lem-cs} Let $\mu,\nu$ be translation bounded measures. Assume that $\lb \mu,\mu\rb_{\cA},\lb \nu,\nu\rb_{\cA}$ and $\lb \mu,\nu\rb_{\cA}$ exist. Let $\tau : LP(G)\longrightarrow \CC$ be positive linear functional. Then,
$$\left|\tau(\lb\mu,\nu\rb_{\cA})\right|^2\leq \tau(\lb\mu,\mu\rb_{\cA})\cdot  \tau (\lb \nu,\nu\rb_{\cA}) \,.$$
\end{lemma}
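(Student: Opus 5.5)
The plan is the classical proof of Cauchy--Schwarz for a positive semidefinite sesquilinear form. Define on the pair $\mu,\nu$ the quantities $a:=\tau(\lb\mu,\mu\rb_{\cA})$, $b:=\tau(\lb\mu,\nu\rb_{\cA})$ and $d:=\tau(\lb\nu,\nu\rb_{\cA})$. By (IP1), $\lb\mu,\mu\rb_\cA$ and $\lb\nu,\nu\rb_\cA$ are positive definite, so $a,d\ge 0$ by positivity of $\tau$. By (IP2), $\lb\nu,\mu\rb_\cA=\widetilde{\lb\mu,\nu\rb_\cA}$ exists. I also need $\tau(\widetilde{\gamma})=\overline{\tau(\gamma)}$ for $\gamma\in LP(G)$. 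This holds because a positive definite $\gamma$ satisfies $\widetilde\gamma=\gamma$, so for $\gamma=\sum c_k\gamma_k$ with $\gamma_k$ positive definite we get $\widetilde\gamma=\sum \overline{c_k}\gamma_k$, and $\tau(\gamma_k)\in\RR$. Hence $\tau(\lb\nu,\mu\rb_\cA)=\overline{b}$.

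Next, for arbitrary $s\in\CC$, consider $\eta:=\mu-s\nu$. Using (IP2) and (IP3) exactly as in the proof of Lemma~\ref{lem-BT}, with the first slot linear and the second conjugate-linear, the autocorrelation $\lb\eta,\eta\rb_\cA$ exists. It equals
\[
\lb\mu,\mu\rb_\cA-\overline{s}\lb\mu,\nu\rb_\cA-s\lb\nu,\mu\rb_\cA+|s|^2\lb\nu,\nu\rb_\cA .
\]
Linearity in the first slot follows from (IP3) combined with (IP2). By (IP1) this measure is positive definite, so applying $\tau$ gives
\[
0\le a-\overline{s}\,b-s\,\overline{b}+|s|^2 d \qquad\forall s\in\CC .
\]

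Finally, optimize in $s$. If $d>0$, take $s=b/d$; this yields $0\le a-|b|^2/d$, which is the claim. If $d=0$, take $s=tb$ with $t>0$; this gives $0\le a-2t|b|^2$ for all $t>0$, forcing $b=0$, and the inequality holds trivially.

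The only delicate step is the bookkeeping of conjugations. One must check that (IP3) together with (IP2) gives the correct sesquilinearity in both slots, so that the expansion of $\lb\mu-s\nu,\mu-s\nu\rb_\cA$ exists and has the stated form. One must also check that $\tau$ intertwines $\widetilde{\cdot}$ with complex conjugation. Both reduce to the facts that $\widetilde{\widetilde{\gamma}}=\gamma$ and that positive definite measures are invariant under $\widetilde{\cdot}$, which I would verify directly from the definition $\gamma*\varphi*\widetilde\varphi(0)\ge0$ via polarization in $\varphi$.
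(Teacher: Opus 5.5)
Your proposal is correct and is essentially the paper's argument: apply $\tau$ to the autocorrelation of $\mu+\lambda\nu$ (you use $\mu-s\nu$), expand it via (IP1)--(IP3), use nonnegativity, and optimize over the complex parameter, with the case $\tau(\lb\nu,\nu\rb_{\cA})=0$ treated separately. Your explicit check that $\tau(\lb\nu,\mu\rb_{\cA})=\overline{\tau(\lb\mu,\nu\rb_{\cA})}$, obtained from $\widetilde{\gamma}=\gamma$ for positive definite $\gamma$, supplies a step the paper uses without comment when completing the square.
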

\begin{proof} This is a variant of the usual proof of Cauchy--Schwarz inequality. For the convenience of the reader we include the details.  As the autocorrelations of $\mu$ and $\nu$ exist as well as $\lb \mu,\nu\rb_{\cA}$ we can use (IP3) to define
\begin{eqnarray*} p(\lambda)&:=&\tau (\lb \mu + \lambda \nu, \mu + \lambda\nu\rb_{\cA})\\
&=& \tau(\lb \mu,\mu\rb_{\cA}) + \lambda \tau(\lb \mu,\nu\rb_{\cA}) + \overline{\lambda} \tau(\lb \nu,\mu\rb_{\cA}) + |\lambda|^2 \tau(\lb\nu,\nu\rb_{\cA}).
\end{eqnarray*}
for $\lambda$ in the complex numbers. By (IP1) and positivity of $\tau$ we infer $p(\lambda)\geq 0$ for all $\lambda$. If $\tau(\lb\mu,\mu\rb_{\cA}) =0$ or $\tau (\lb\nu,\nu\rb_{\cA}) =0$ holds, we easily infer
 $\tau (\lb\mu,\nu\rb_{\cA}) =0$ by taking suitable limits of $\lambda$.

So, we can assume without loss of generality $\tau(\lb\mu,\mu\rb_{\cA})\neq 0$ and $ \tau(\lb\nu,\nu\rb_{\cA} ) \neq 0$. From this we find
$$ 0\leq p(\lambda) = \tau(\lb\nu,\nu\rb_{\cA}) \left|\lambda +\frac{\tau(\lb\nu,\mu\rb_{\cA})}{\tau(\lb \nu,\nu\rb_{\cA})}\right|^2  + \tau(\lb\mu,\mu\rb_{\cA}) -   \frac{|\tau(\lb\nu,\mu\rb_{\cA}|^2}{\tau(\lb \nu,\nu\rb_{\cA})} . $$
Choosing $\lambda$ so that the first bracket vanishes then easily gives the desired statement.
\end{proof}

For any measurable $E\subset \widehat{G}$ supported inside a compact set the functional
$$\tau_E : LP(G)\longrightarrow \CC, \tau_E (\mu) :=\widehat{\mu}(E),$$
is positive and we obtain the following consequence of the previous lemma.

\begin{coro}\label{cor1} Let $E$ be a relatively compact measurable subset of $\widehat{G}$. Let $\mu,\nu$ be translation bounded measures. Assume that $\lb \mu,\mu\rb_{\cA},\lb \nu,\nu\rb_{\cA}$ and $\lb \mu,\nu\rb_{\cA}$ exist. Then, $\rho_{\mu,\nu}:= \reallywidehat{\lb\mu,\nu\rb_{\cA}}$ satisfies
$$|\rho_{\mu,\nu} (E)|^2 \leq  \reallywidehat{\gamma_\mu} (E) \cdot \reallywidehat{\gamma_\nu} (E)\,.$$ \qed
\end{coro}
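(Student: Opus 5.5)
The plan is to apply Lemma~\ref{lem-cs} with the functional $\tau = \tau_E$, where $\tau_E(\eta) := \widehat{\eta}(E)$ for $\eta \in LP(G)$. The only real work is to confirm that $\tau_E$ is a well-defined positive linear functional on $LP(G)$. Once that is done, the inequality follows immediately: we have $\tau_E(\lb\mu,\nu\rb_{\cA}) = \rho_{\mu,\nu}(E)$, and likewise $\tau_E(\lb\mu,\mu\rb_{\cA}) = \reallywidehat{\gamma_\mu}(E)$ and $\tau_E(\lb\nu,\nu\rb_{\cA}) = \reallywidehat{\gamma_\nu}(E)$.

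The first step is well-definedness and linearity. Every $\eta\in LP(G)$ is a finite linear combination $\sum_k c_k \gamma_k$ of positive definite measures. Its Fourier transform is defined by linearity as $\widehat{\eta} = \sum_k c_k\widehat{\gamma_k}$, and this is independent of the representation because the Fourier transform on $LP(G)$ is well defined, as the paper notes. Each $\widehat{\gamma_k}$ is a positive Radon measure on $\widehat{G}$. Since $E$ is relatively compact and measurable, each $\widehat{\gamma_k}(E) \le \widehat{\gamma_k}(\overline{E}) < \infty$. Therefore $\widehat{\eta}(E) = \sum_k c_k \widehat{\gamma_k}(E)$ is a finite complex number.

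The possible subtlety is that $\widehat{\eta}$ is a complex Radon measure that might not be finite. Its restriction to the compact set $\overline{E}$ is nevertheless a finite complex measure, so evaluation at $E$ makes sense and is linear in $\eta$. This gives linearity of $\tau_E$. For positivity, if $\gamma$ is positive definite then $\widehat{\gamma}$ is a positive measure, so $\tau_E(\gamma) = \widehat{\gamma}(E) \geq 0$.

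The second step is to check the hypotheses of Lemma~\ref{lem-cs}. The autocorrelations of $\mu$ and $\nu$ and the convolution $\lb\mu,\nu\rb_{\cA}$ exist by assumption. By (Pol), $\lb\mu,\nu\rb_{\cA}\in LP(G)$, and by (IP1) the two autocorrelations are positive definite, so in particular they lie in $LP(G)$. Applying the lemma then gives
\[
|\rho_{\mu,\nu}(E)|^2 = |\tau_E(\lb\mu,\nu\rb_{\cA})|^2 \le \tau_E(\gamma_\mu)\,\tau_E(\gamma_\nu) = \reallywidehat{\gamma_\mu}(E)\cdot \reallywidehat{\gamma_\nu}(E),
\]
which is the claim. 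I expect no genuine obstacle here. The only point needing care is the finiteness and well-definedness of $\widehat{\eta}(E)$ for non-positive $\eta\in LP(G)$, and that is handled by relative compactness of $E$ together with the local finiteness of the positive measures $\widehat{\gamma_k}$.
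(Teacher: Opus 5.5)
Your proposal is correct and follows the paper's route exactly: the paper obtains this corollary by applying the Cauchy--Schwarz Lemma~\ref{lem-cs} to the positive functional $\tau_E(\eta)=\widehat{\eta}(E)$ on $LP(G)$. Your extra checks simply spell out what the paper states without proof, namely well-definedness and finiteness of $\tau_E$ via relative compactness of $E$, its positivity, and membership of $\lb\mu,\nu\rb_{\cA}$ in $LP(G)$ via (Pol).
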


\begin{remark} This corollary captures our main result on the Quantitative Bombieri--Taylor. Indeed, with $\mu = \mu$ and $\nu = \chi \theta_G$ and assuming that $\gamma_\mu$ and $a_\chi(\mu)= \rho_{\mu,\chi}(\{\chi\})$ exist, we find
$$|a_\chi (\mu)| = |\rho_{\mu,\chi}(\{\chi\})| \leq \reallywidehat{\gamma_\mu} (\{\chi\})^{1/2}  \reallywidehat{\gamma_\chi} (\{\chi\})^{1/2}=\reallywidehat{\gamma_\mu} (\{\chi\})^{1/2}  \,.$$
If $a_\chi (\mu)$ does not exist, we can go to a subnet along which it does exist and infer the statement of the theorem.
\end{remark}

We are going to extend the statement of the previous corollary from sets  to functions. To do so, we need some preparation.

\begin{prop}\label{prop1} Let $\mu,\nu$ be translation bounded measures. Assume that $\gamma_\mu= \lb \mu,\mu\rb_{\cA}, \gamma_\nu= \lb \nu,\nu\rb_{\cA}$ and $\gamma_{\mu,\nu}=\lb \mu,\nu\rb_{\cA}$ exist.

Let $f,g$ be bounded measurable functions on $\widehat{G}$ vanishing outside a compact set. Then,
$$\left|\int_{\widehat{G}} f(\chi) \cdot g(\chi) \, \dd \rho_{\mu,\nu}(\chi) \right|^2 \leq \biggl( \int_{\widehat{G}} |f(\chi)|^2 \dd \reallywidehat{\gamma_\mu}(\chi) \biggr) \cdot \biggl(\int_{\widehat{G}}|g(\chi)|^2 \dd \reallywidehat{\gamma_\nu}(\chi)\biggr) \,.$$
\end{prop}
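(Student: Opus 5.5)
We follow the same pattern as Corollary~\ref{cor1}, but use a family of positive functionals built from $f$ and $g$ instead of indicator functions of sets. A direct scalar version of Lemma~\ref{lem-cs} will not suffice, because the left hand side pairs $f$ with $g$ while the right hand side uses $|f|^2$ and $|g|^2$. The key observation is that the Fourier transforms of Eberlein convolutions form a ``positive semidefinite $2\times 2$ matrix of measures''. Concretely, for $\lambda\in\CC$ the measure $\mu+\lambda\nu$ has autocorrelation
\[
\gamma_\mu + \lambda\gamma_{\mu,\nu}+\overline{\lambda}\gamma_{\nu,\mu}+|\lambda|^2\gamma_\nu
\]
by (IP2) and (IP3), and this is positive definite by (IP1). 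Taking Fourier transforms, which is linear on $LP(G)$, we get that
\[
\reallywidehat{\gamma_\mu} + \lambda\rho_{\mu,\nu}+\overline{\lambda}\rho_{\nu,\mu}+|\lambda|^2\reallywidehat{\gamma_\nu}
\]
is a positive measure for every $\lambda\in\CC$. By (IP2) we also have $\rho_{\nu,\mu}=\overline{\rho_{\mu,\nu}}$.

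\textbf{Step 1: a pointwise density inequality.} Let $\omega:=\reallywidehat{\gamma_\mu}+\reallywidehat{\gamma_\nu}+|\rho_{\mu,\nu}|$. All three measures are absolutely continuous with respect to $\omega$, so we can write
\[
\reallywidehat{\gamma_\mu}=a\,\omega,\qquad \reallywidehat{\gamma_\nu}=b\,\omega,\qquad \rho_{\mu,\nu}=h\,\omega
\]
with $a,b\ge 0$ (Radon--Nikodym). Positivity for all $\lambda$ means $a+2\,\mathrm{Re}(\lambda h)+|\lambda|^2 b\ge 0$ holds $\omega$-a.e. for each fixed $\lambda$. Restricting to countably many $\lambda$ in $\mathbb{Q}+i\mathbb{Q}$ gives a common null set, and by continuity in $\lambda$ the inequality then holds for all $\lambda$ simultaneously, $\omega$-a.e. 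The scalar Cauchy--Schwarz argument from Lemma~\ref{lem-cs} now yields $|h|^2\le ab$ $\omega$-a.e. This measurability and null-set step is the one point that needs care. Working locally, on a compact set containing the supports of $f$ and $g$, all measures are finite, so Radon--Nikodym applies without issue.

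\textbf{Step 2: integrate.} We have
\[
\left|\int fg\,\dd\rho_{\mu,\nu}\right| \le \int |f||g||h|\,\dd\omega \le \int |f|\sqrt{a}\,|g|\sqrt{b}\,\dd\omega .
\]
By the ordinary Cauchy--Schwarz inequality in $L^2(\omega)$ this is at most
\[
\left(\int |f|^2 a\,\dd\omega\right)^{1/2}\left(\int|g|^2 b\,\dd\omega\right)^{1/2},
\]
which is exactly the claimed bound. The integrals are finite because $f$ and $g$ are bounded and vanish outside a compact set, on which the measures are finite.

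An alternative route avoids densities: apply Lemma~\ref{lem-cs} to $\tau(\eta):=\int \phi\,\dd\widehat{\eta}$ with $\phi\ge 0$, and approximate $f,g$ by simple functions. However, the $2\times 2$ matrix-of-measures argument above is cleaner, and Step 1 is the only nontrivial point.
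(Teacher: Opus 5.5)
Your proof is correct, but it takes a different route from the paper's. The paper avoids densities. It approximates $f$ and $g$ by step functions $\sum_j c_j 1_{E_j}$ and $\sum_j d_j 1_{E_j}$ on a common partition of $K$ into disjoint sets $E_j$. It then applies Corollary~\ref{cor1} on each $E_j$ to get $|\rho_{\mu,\nu}(E_j)|\le \widehat{\gamma_\mu}(E_j)^{1/2}\,\widehat{\gamma_\nu}(E_j)^{1/2}$, and finishes with the discrete Cauchy--Schwarz inequality in $\CC^N$. This is essentially the ``alternative route'' you mention at the end, with $\phi=1_{E_j}$.

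You use the same positivity, namely that the $2\times 2$ matrix of measures is positive semidefinite. Instead of applying it set by set, you disintegrate it into the pointwise bound $|h|^2\le ab$ $\omega$-a.e. and then apply Cauchy--Schwarz in $L^2(\omega)$. Your handling of the null sets (rational $\lambda$ plus continuity) and of local finiteness (restricting to $K$) is sound.

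The paper's argument is more elementary, since only the set-wise inequality is needed. It does leave implicit the passage from step functions to general bounded $f,g$; that step works by uniform approximation on $K$, where all the measures are finite. Your argument needs no approximation, at the cost of Radon--Nikodym and the null-set step, and it gives more. Since $|\rho_{\mu,\nu}|=|h|\,\omega$, your Step~2 already bounds $\int |f||g|\,\dd|\rho_{\mu,\nu}|$. So it yields the total-variation inequality of the subsequent theorem, for bounded compactly supported $f,g$, without the phase-function trick. The bound $|h|\le\sqrt{ab}$ also gives the absolute-continuity corollary at once.

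One minor point: the Eberlein convolution is conjugate-linear in its second argument, as used in the proof of Lemma~\ref{lem-BT}. So the roles of $\lambda$ and $\overline{\lambda}$ in your expansion are swapped, which is harmless because $\lambda$ ranges over all of $\CC$.
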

\begin{proof}  Let $K$ be a compact set outside of which $f$ and $g$ vanish.
Approximating $f$ and $g$ by step functions we can assume without loss of generality that there exist disjoint measurable sets $E_j \subset K$, $j=1,\ldots, N$, and $c_1,\ldots, c_N, d_1,\ldots, d_N\in \CC$ such that
$$ f =  \sum_{j=1}^N c_j 1_{E_j} \mbox{ and } g = \sum_{j=1}^N d_j 1_{E_j}.$$
Now, we can compute
\begin{align*}
\left|\int_{\widehat{G}} f(\chi) \cdot g(\chi) \, \dd \rho_{\mu,\nu}(\chi) \right|^2  =&  \left| \sum_{j=1}^N c_j d_j  \,  \rho_{\mu,\nu}(E_j)\right|^2\\
(\mbox{Corollary \ref{cor1}})\:\; &\leq  \left|\sum_{j=1}^N |c_j| \reallywidehat{\gamma_\mu}(E_j)^{1/2} |d_j|  \reallywidehat{\gamma_\nu}(E_j)^{1/2}\right|^2\\
(\mbox{Cauchy-Schwarz})\;\:&\leq \left(\sum_{j=1}^N |c_j|^2 \reallywidehat{\gamma_\mu}(E_j)\right) \left(\sum_{j=1}^N |d_j|^2 \reallywidehat{\gamma_\nu}(E_j)\right)\\
&=\biggl(\int_{\widehat{G}} |f(\chi)|^2 \dd\reallywidehat{\gamma_\mu}(\chi)\biggr) \cdot \biggl(\int_{\widehat{G}} |g(\chi)|^2 \dd\reallywidehat{\gamma_\nu}(\chi) \biggr) \,.
\end{align*}
This finishes proof.
\end{proof}

After these preparations we can now come to the main result of this section.

\begin{theorem} Let $f,g : \widehat{G} \to \CC$ be  Borel measurable.
Let $\mu,\nu$ be translation bounded measures. Assume that $\gamma_\mu= \lb \mu,\mu\rb_{\cA}, \gamma_\nu= \lb \nu,\nu\rb_{\cA}$ and $\gamma_{\mu,\nu}=\lb \mu,\nu\rb_{\cA}$ exist.
Then,
$$\left( \int_{\widehat{G}}\left|f(\chi) \cdot g(\chi)\right| \, \dd  |\rho_{\mu,\nu}|(\chi) \right)^2 \leq  \left( \int_{\widehat{G}} |f(\chi)|^2 \dd  \reallywidehat{\gamma_\mu}(\chi) \right) \cdot \left( \int_{\widehat{G}}|g(\chi)|^2  \dd \reallywidehat{\gamma_\nu}(\chi)\right) \,.$$
\end{theorem}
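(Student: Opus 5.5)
The idea is to reduce to Proposition~\ref{prop1} by absorbing the polar part of $\rho_{\mu,\nu}$ and the phases of $f,g$ into one of the two functions, and then to remove the boundedness and compact-support assumptions by monotone convergence.

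\emph{Step 1: the polar decomposition.} Write $\dd\rho_{\mu,\nu} = h\, \dd|\rho_{\mu,\nu}|$ with $h$ measurable and $|h|=1$, as recalled in Section~\ref{sec-back}. Fix a compact set $K\subset\widehat{G}$ and $n\in\mathbb{N}$. Define the truncations
\[
f_{n,K} := 1_K\, \min\{|f|,n\}, \qquad g_{n,K} := 1_K\, \min\{|g|,n\}\,\overline{h}.
\]
Both are bounded, measurable and vanish outside $K$. Moreover $|g_{n,K}|=1_K\min\{|g|,n\}$, and
\[
\int_{\widehat{G}} f_{n,K}\, g_{n,K}\, \dd\rho_{\mu,\nu} = \int_K \min\{|f|,n\}\min\{|g|,n\}\,\dd|\rho_{\mu,\nu}|,
\]
which is a nonnegative number.

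\emph{Step 2: apply Proposition~\ref{prop1}.} Applying Proposition~\ref{prop1} to $f_{n,K}$ and $g_{n,K}$ gives
\[
\Bigl(\int_K \min\{|f|,n\}\min\{|g|,n\}\,\dd|\rho_{\mu,\nu}|\Bigr)^2 \leq \Bigl(\int_{\widehat{G}} |f|^2\,\dd\reallywidehat{\gamma_\mu}\Bigr)\cdot\Bigl(\int_{\widehat{G}} |g|^2\,\dd\reallywidehat{\gamma_\nu}\Bigr).
\]
Here we used $\min\{|f|,n\}^2 1_K\le |f|^2$, the analogous bound for $g$, and the positivity of $\reallywidehat{\gamma_\mu}$ and $\reallywidehat{\gamma_\nu}$ (by (IP1)). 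If the right-hand side is infinite there is nothing to prove, so we may assume it is finite.

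\emph{Step 3: pass to the limit.} Let $n\to\infty$ and let $K$ run through an increasing family of compact sets. The integrands increase to $|fg|$, so monotone convergence gives the claimed inequality.

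\emph{Main obstacle.} The delicate point is that $\widehat{G}$ need not be $\sigma$-compact, so a countable exhaustion by compacts may not exist. Monotone convergence still works along an increasing sequence $K_m$ whenever $|\rho_{\mu,\nu}|$ is concentrated on $\bigcup_m K_m$ up to inner regularity. To handle the general case I would argue as follows. For any compact $K$ the left-hand integral over $K$ is bounded by the right-hand side, so the supremum over compacts is bounded too. Since $|\rho_{\mu,\nu}|$ is a Radon measure, the integral of the nonnegative function $|fg|$ against it equals this supremum over compacts: either by inner regularity on $\sigma$-finite pieces, or by the standard definition of integrals against Radon measures as suprema over $\Cc$-minorants, each of which has compact support.

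A secondary point to check is that $\rho_{\mu,\nu}$ is a genuine complex Radon measure with total variation $|\rho_{\mu,\nu}|$, so that Proposition~\ref{prop1} applies to $g_{n,K}$ involving $\overline{h}$. This follows from \eqref{eq:pol}: $\rho_{\mu,\nu}$ is a linear combination of positive measures, and in particular it is locally finite.
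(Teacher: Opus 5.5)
Your proposal is correct and is essentially the paper's proof. The paper also reduces to bounded $f,g$ vanishing outside a compact set via inner regularity of $|\rho_{\mu,\nu}|$, and then applies Proposition~\ref{prop1} to $|f|$ and to $|g|$ multiplied by the conjugate of the polar density of $\rho_{\mu,\nu}$. The only difference is that you carry out that reduction explicitly (truncation with $\min\{\cdot,n\}$ and $1_K$, monotone convergence, then the supremum over compact sets), which the paper compresses into the phrase ``by (inner) regularity''.
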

\begin{proof}  By  (inner) regularity of the total variation $|\rho_{\mu,\nu}|$ we can assume without loss of generality that $f$ and $g$ are bounded and vanish outside a compact set $K$.

We then find from the definition of the total variation  (see Section \ref{sec-back})
$$\int_{\widehat{G}}\left|f(\chi) \cdot g(\chi)\right| \, \dd  |\rho_{\mu,\nu}|(\chi) = \int_{\widehat{G}}\left|f(\chi) \cdot g(\chi)\right| \cdot h(\chi) \, \dd  \rho_{\mu,\nu}(\chi)$$
for some measurable $h : \widehat{G}\longrightarrow \CC$ with $|h|=1$.

Now, since $|h|=1$, the claim follows from Proposition~\ref{prop1} applied to $|f|$ and $|g| h$ (instead of  $f$ and $g$).
\end{proof}

\begin{remark} The  previous theorem deals with integration with respect to the total variation of $\reallywidehat{\lb\mu,\nu\rb_{\cA}}$. Of course, we also have an estimate without the total variation. Specifically,   for bounded $f,g$ vanishing outside a compact set we find
\begin{align*}
\left| \int_{\widehat{G}}f(\chi) \cdot g(\chi) \, \dd  \rho_{\mu,\nu}(\chi) \right|^2 &\leq \left( \int_{\widehat{G}}\left|f(\chi) \cdot g(\chi)\right| \, \dd  |\rho_{\mu,\nu}|(\chi) \right)^2 \\
&\leq  \left( \int_{\widehat{G}} |f(\chi)|^2 \dd  \reallywidehat{\gamma_\mu}(\chi) \right) \cdot \left( \int_{\widehat{G}}|g(\chi)|^2  \dd \reallywidehat{\gamma_\nu}(\chi)\right) \,.
\end{align*}
\end{remark}

An immediate consequence of the theorem is noted next.

\begin{coro}
Let $\mu,\nu$ be translation bounded measures.  Assume that $\gamma_\mu= \lb \mu,\mu\rb_{\cA}, \gamma_\nu= \lb \nu,\nu\rb_{\cA}$ and $\gamma_{\mu,\nu}=\lb \mu,\nu\rb_{\cA}$ exist. Then, $| \rho_{\mu,\nu}|$ is absolutely continuous with respect to each of $\reallywidehat{\gamma_\mu}$ and $\reallywidehat{\gamma_\nu}$. In particular, also $ \rho_{\mu,\nu}$ is absolutely continuous with respect to each of $\reallywidehat{\gamma_\mu}$ and $\reallywidehat{\gamma_\nu}$.
\qed
\end{coro}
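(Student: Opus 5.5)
The plan is to read off absolute continuity directly from the preceding theorem by testing it with indicator functions of null sets. Fix a Borel set $E\subseteq \widehat{G}$ with $\reallywidehat{\gamma_\mu}(E)=0$; we must show $|\rho_{\mu,\nu}|(E)=0$. By inner regularity of the positive Radon measure $|\rho_{\mu,\nu}|$, it suffices to show $|\rho_{\mu,\nu}|(K)=0$ for every compact $K\subseteq E$. Each such $K$ still satisfies $\reallywidehat{\gamma_\mu}(K)=0$ and has finite $\reallywidehat{\gamma_\nu}$-measure, since $\reallywidehat{\gamma_\nu}$ is a positive Radon measure.

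For such $K$, apply the previous theorem with $f=g=1_K$. Both are bounded Borel functions vanishing outside a compact set, so we obtain
\[
\bigl(|\rho_{\mu,\nu}|(K)\bigr)^2 \leq \reallywidehat{\gamma_\mu}(K)\cdot \reallywidehat{\gamma_\nu}(K) = 0 \cdot \reallywidehat{\gamma_\nu}(K)=0 .
\]
Hence $|\rho_{\mu,\nu}|(K)=0$. Taking the supremum over compact $K\subseteq E$ gives $|\rho_{\mu,\nu}|(E)=0$.

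Exchanging the roles of the two factors, the same argument applied to a set $E$ with $\reallywidehat{\gamma_\nu}(E)=0$ gives $|\rho_{\mu,\nu}|\ll \reallywidehat{\gamma_\nu}$. One could alternatively use $f=1_K$ and $g=1_K$ with the bound read in the other order; in either case, finiteness of the other factor on compacta is what makes the product vanish. This finiteness is also the reason we restrict to compact sets rather than applying the theorem to $1_E$ directly, where $\infty\cdot 0$ could occur.

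Finally, $\rho_{\mu,\nu}\ll|\rho_{\mu,\nu}|$ holds by the polar decomposition $\dd\rho_{\mu,\nu}=h\,\dd|\rho_{\mu,\nu}|$ recalled in Section~\ref{sec-back}. Absolute continuity is transitive, so the statement for $\rho_{\mu,\nu}$ follows. The only mildly delicate step is the reduction to compact subsets, which is needed to avoid infinite measures in the product. Everything else is immediate from the theorem.
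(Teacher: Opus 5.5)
Your proof is correct and is essentially the paper's argument: the paper offers no proof beyond calling the corollary an immediate consequence of the preceding theorem, and plugging indicator functions into that theorem is exactly that. Your reduction to compact $K\subseteq E$, which avoids the $0\cdot\infty$ issue, relies on inner regularity of $|\rho_{\mu,\nu}|$; this is the same regularity convention the paper itself invokes in the proof of the theorem.
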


\begin{coro}Let $\mu,\nu$ be translation bounded measures. Assume that $\gamma_\mu= \lb \mu,\mu\rb_{\cA}, \gamma_\nu= \lb \nu,\nu\rb_{\cA}$ and $\gamma_{\mu,\nu}=\lb \mu,\nu\rb_{\cA}$ exist and that the measures $\reallywidehat{\gamma_\mu}$ and $\reallywidehat{\gamma_\nu}$  are supported inside measurable subsets $A,B$ of $\widehat{G}$.
Then, $\rho_{\mu,\nu}$ is supported inside $A \cap B$. \qed
\end{coro}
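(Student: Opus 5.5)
The statement to prove is the last corollary: $\rho_{\mu,\nu}$ is supported inside $A\cap B$. The natural route is through the previous corollary, namely absolute continuity of $|\rho_{\mu,\nu}|$ with respect to both $\reallywidehat{\gamma_\mu}$ and $\reallywidehat{\gamma_\nu}$. Here "supported inside a measurable set $A$" is read as: the complement of $A$ is a null set for the (total variation of the) measure. So we must show
\[
|\rho_{\mu,\nu}|(\widehat{G}\setminus (A\cap B))=0 .
\]

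First, write $\widehat{G}\setminus(A\cap B) = (\widehat{G}\setminus A)\cup(\widehat{G}\setminus B)$. By subadditivity it suffices to show that $|\rho_{\mu,\nu}|(\widehat{G}\setminus A)=0$ and $|\rho_{\mu,\nu}|(\widehat{G}\setminus B)=0$.

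For the first of these, take $f = 1_{\widehat{G}\setminus A}$ and $g = 1_{\widehat{G}\setminus A}$ in the preceding theorem. The left-hand side becomes $|\rho_{\mu,\nu}|(\widehat{G}\setminus A)^2$, and the right-hand side is bounded by
\[
\reallywidehat{\gamma_\mu}(\widehat{G}\setminus A)\cdot \reallywidehat{\gamma_\nu}(\widehat{G}\setminus A).
\]
The first factor vanishes by assumption. The only point needing care is the second factor: it might be infinite, giving the product $0\cdot\infty$.

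To avoid this, do not use the global sets. Instead apply the theorem (or directly Proposition~\ref{prop1} with the phase $h$) to $f=1_{E}$ and $g=1_E$, where $E=K\cap(\widehat{G}\setminus A)$ and $K$ is compact. Then $\reallywidehat{\gamma_\nu}(E)\le \reallywidehat{\gamma_\nu}(K)<\infty$, since Fourier transforms of positive definite measures are Radon measures. Hence $|\rho_{\mu,\nu}|(E)=0$ for every compact $K$. Inner regularity of $|\rho_{\mu,\nu}|$, or $\sigma$-compactness arguments via regularity, then gives $|\rho_{\mu,\nu}|(\widehat{G}\setminus A)=0$. Equivalently, one can simply cite the previous corollary: absolute continuity of $|\rho_{\mu,\nu}|$ with respect to $\reallywidehat{\gamma_\mu}$ immediately yields $|\rho_{\mu,\nu}|(\widehat{G}\setminus A)=0$.

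The same argument with the roles of $\mu$ and $\nu$ exchanged, using absolute continuity with respect to $\reallywidehat{\gamma_\nu}$, gives $|\rho_{\mu,\nu}|(\widehat{G}\setminus B)=0$. Combining the two null sets, $|\rho_{\mu,\nu}|$ (and hence $\rho_{\mu,\nu}$) vanishes outside $A\cap B$.

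The main obstacle is only the measure-theoretic bookkeeping, namely the $0\cdot\infty$ issue and passing from compact pieces to the whole complement. Both are handled by localizing to compact sets together with inner regularity, exactly as in the proof of the preceding theorem.
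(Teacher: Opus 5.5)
Your proposal is correct and follows essentially the paper's route. The paper leaves this corollary as an immediate consequence of the preceding one: absolute continuity of $|\rho_{\mu,\nu}|$ with respect to both Fourier transforms gives $|\rho_{\mu,\nu}|(\widehat{G}\setminus A)=0=|\rho_{\mu,\nu}|(\widehat{G}\setminus B)$. Your handling of the $0\cdot\infty$ issue, by localizing to compact sets and then using inner regularity, is the same bookkeeping the paper uses in the proof of its theorem.
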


In particular, we get the following result, which was already proven in \cite{LS}.

\begin{coro}Let $\mu,\nu$ be translation bounded measures.  Assume that the autocorrelations of both $\mu$ and $\nu$ exist and $\reallywidehat{\gamma_\mu}$ and $\reallywidehat{\gamma_\nu}$  are supported on disjoint measurable subsets $A,B$ of $\widehat{G}$. Then, $\lb\mu,\nu\rb_{\cA}$ exists and
\[
\lb\mu,\nu\rb_{\cA} =0 \,.
\] \qed
\end{coro}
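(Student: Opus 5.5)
The plan is to reduce the statement to the preceding corollary by a compactness argument. The hypotheses do \emph{not} include existence of $\lb\mu,\nu\rb_{\cA}$, so existence has to be produced along the way. Consider the net of measures
\[
\sigma_i := \frac{1}{|A_i|} (\mu|_{A_i})*\widetilde{(\nu|_{A_i})} \,.
\]
The key input is that, since $\mu$ and $\nu$ are translation bounded and $\cA$ is van Hove, the family $\{\sigma_i\}$ is uniformly translation bounded. This is the standard estimate from \cite{LSS,LSS2,LSS3} that gives existence of autocorrelations along subnets. Hence $\{\sigma_i\}$ is relatively compact in the vague topology. It therefore suffices to show that every vague cluster point of $(\sigma_i)$ equals $0$. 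A relatively compact net whose only cluster point is $0$ converges to $0$, and this gives both the existence of $\lb\mu,\nu\rb_{\cA}$ and the claimed value.

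So let $\cB=(B_j)$ be a subnet of $\cA$ along which $\sigma$ converges to some measure $\sigma_\infty$. A subnet of a van Hove net is again van Hove. Moreover, along $\cB$ the autocorrelations of $\mu$ and $\nu$ still exist and equal $\gamma_\mu$ and $\gamma_\nu$, since limits pass to subnets. By construction $\lb\mu,\nu\rb_{\cB}=\sigma_\infty$ exists. All hypotheses of the previous corollary (supports of $\rho_{\mu,\nu}$ in $A\cap B$) therefore hold with respect to $\cB$. It yields that $\rho_{\mu,\nu}=\reallywidehat{\sigma_\infty}$ is supported in $A\cap B=\emptyset$, so $\reallywidehat{\sigma_\infty}=0$.

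Finally, by (Pol), $\sigma_\infty$ lies in $LP(G)$. The Fourier transform is injective on $LP(G)$: if $\widehat{\sigma_\infty}=0$, then $\sigma_\infty*\varphi*\widetilde{\psi}(0)=0$ for all $\varphi,\psi\in\Cc(G)$ by polarization of the defining identity. Since $\Cc(G)*\Cc(G)$ is dense enough to separate measures, this gives $\sigma_\infty=0$. Thus every cluster point vanishes and the claim follows.

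The main obstacle is the uniform translation boundedness of $(\sigma_i)$, which gives vague precompactness. I would take it from the cited works, where the analogous statement is proved for $\mu=\nu$; the mixed case follows by the same estimate, $|\sigma_i*\varphi|\le$ a constant depending only on $\mu$, $\nu$ and the support of $\varphi$.
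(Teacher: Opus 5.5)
Your proposal is correct and follows the paper's route. The paper obtains this statement directly ("in particular") from the preceding support corollary. You supply the existence step it leaves implicit: the approximants $\frac{1}{|A_i|}(\mu|_{A_i})*\widetilde{(\nu|_{A_i})}$ are uniformly translation bounded (eventually along the net, by the van Hove property, which suffices), and every vague cluster point is a reflected Eberlein convolution along a subnet to which that corollary applies. Injectivity of the Fourier transform on $LP(G)$ then forces each cluster point to be $0$, so the net converges to $0$.
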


\subsection*{Acknowledgments} N.S. was supported by the NSERC Discovery grant 2024-04853, and he is grateful for the support.

\end{document}